\DeclareMathAlphabet{\mathcal}{OMS}{cmsy}{m}{n}
\tikzset{
	mybrace/.style={decorate,decoration={brace,aspect=#1}}
}
\newcommand{\N}{\mathbb{N}}
\newcommand{\F}{\mathbb{F}}
\newtheorem{definition}{Definition}
\newtheorem{proposition}{Proposition}
\newtheorem{theorem}{Theorem}
\newcommand{\set}[1]{\ifthenelse{\isempty{#1}}{\varnothing}{\{#1\}}}
\newcommand{\familyF}{\mathcal{F}}
\providecommand{\keywords}[1]{\textbf{\textit{Keywords }} #1}
\begin{document}

\title{Search Space Reduction of Asynchrony Immune Cellular Automata by Center Permutivity}

\author[1]{Luca Mariot}
\author[2]{Luca Manzoni}
\author[1]{Alberto Dennunzio}

\affil[1]{{\normalsize Dipartimento di Informatica, Sistemistica e
    Comunicazione, Universit\`{a} degli Studi di Milano-Bicocca, Viale Sarca
    336, 20126 Milano, Italy} \\

  {\small \texttt{\{luca.mariot, alberto.dennunzio\}@unimib.it}}}

\affil[2]{{\normalsize Dipartimento di Matematica e Geoscienze, Universit\`a
    degli Studi di Trieste, Via Valerio 12/1, 34127 Trieste, Italy} \\

  {\small \texttt{lmanzoni@units.it}}}

\maketitle

\begin{abstract}
We continue the study of asynchrony immunity in cellular automata (CA), which can be considered as a generalization of correlation immunity in the case of vectorial Boolean functions. The property could have applications as a countermeasure for side-channel attacks in CA-based cryptographic primitives, such as S-boxes and pseudorandom number generators. We first give some theoretical results on the properties that a CA rule must satisfy in order to meet asynchrony immunity, like central permutivity. Next, we perform an exhaustive search of all asynchrony immune CA rules of neighborhood size up to $5$, leveraging on the discovered theoretical properties to greatly reduce the size of the search space.
\end{abstract}

\keywords{cellular automata, cryptography, asynchrony immunity, correlation immunity, nonlinearity, side-channel attacks, permutivity}

\section{Introduction}
In the last years, research about cryptographic applications of cellular automata (CA) focused on the properties of the underlying local rules~\cite{martin2006a,leporati2014a,formenti2014a}. In fact, designing a CA\nobreakdash-\hspace{0pt}based cryptographic primitive using local rules that are not highly nonlinear and correlation immune could make certain attacks more efficient.

The aim of this paper is to investigate a new property related to \emph{asynchronous CA} called \emph{asynchrony immunity} (AI), which could be of interest in the context of side-channel attacks. This property can be described by a three-move game between a user and an adversary. Let $\ell, r, m \in \N$, $n=m+\ell+r$ and $t \le m$. The game works as follows:
\begin{enumerate}
  \item The user chooses a local rule $f: \F_2^{\ell+r+1} \rightarrow \F_2$ of memory $\ell$ and anticipation $r$.
  \item The adversary chooses $j \le t$ cells of the CA in the range $\{0,\cdots,m-1\}$.
  \item The user evaluates the output distribution $D$ of the CA $F:\F_2^{n}\rightarrow\F_2^{m}$ and the distribution $\tilde{D}$ of the asynchronous CA $\tilde{F}: \F_2^{n} \rightarrow \F_2^{m}$ where the $j$ cells selected by the adversary are not updated.
  \item \emph{Outcome}: if both $D$ and $\tilde{D}$ equals the uniform distribution, the user wins. Otherwise, the adversary wins.
\end{enumerate}
A cellular automaton rule $f:\F_2^{\ell+r+1} \rightarrow \F_2$ is called $(t,n)$--asynchrony immune if, for every subset $I$ of at most $t$ cells both the asynchronous CA $\tilde{F}: \F_2^{n} \rightarrow \F_2^{m}$ resulting from not updating on the subset $I$ of cells and the corresponding synchronous CA $F:\F_2^{n} \rightarrow \F_2^{m}$ are balanced, that is, the cardinality of the counterimage of each $m$-bit configuration equals $2^{\ell+r}$. Thus, asynchrony immune CA rules represent the winning strategies of the user in the game described above.

Notice the difference between the asynchrony immunity game and the \emph{$t$-resilient functions game}~\cite{chor1985a}: in the latter, generic vectorial Boolean functions $F: \F_2^n \rightarrow \F_2^m$ are considered instead of cellular automata, and the adversary selects both values and positions of the $t$ input variables.

The side-channel attack model motivating our work is the following. Suppose that a CA of length $n$ is used as an S-box in a block cipher, and that an attacker is able to inject \emph{clock faults} by making $t$ cells not updating. If the CA is not $(t,n)$-AI, then the attacker could gain some information on the internal state of the cipher by analyzing the differences of the output distributions in the original CA and the asynchronous CA. Similar fault attacks have already been investigated on stream ciphers based on clock-controlled \emph{Linear Feedback Shift Registers} (LFSR), such as LILI-128~\cite{lili128}. For further information on the topic, Hoch and Shamir~\cite{hoch04} provide more references on clock fault attacks on stream ciphers.

This paper is an extended version of~\cite{mariot2016}. In particular, the new contribution is twofold: from the theoretical side, we formally prove the necessity of central permutivity to have asynchrony immunity, which was conjectured in~\cite{mariot2016} according to the experimental results reported there. From the empirical point of view, we employ this new theoretical result to consistently extend the experimental search of asynchrony immune rules, by considering larger neighborhood sizes.

In the remainder of this paper, we recall in Section~\ref{sec:basic-notions} the necessary basic notions about Boolean functions and (asynchronous) CA, and we formally introduce the definition of asynchrony immunity in Section~\ref{sec:definition-of-asynchrony-immunity}, giving some theoretical results regarding this property. In particular, we show that AI is invariant under the operations of reflection and complement and that, for high enough values of $t$ (the maximum number of blocked cells), \emph{central permutivity} is a necessary condition for asynchrony immunity. We then perform in Section~\ref{sec:exp} an exhaustive search of asynchrony immune CA having $8$ output cells and neighborhood size up to $5$, computing also their nonlinearity and algebraic normal form. Finally, we provide some possible ways to generalize the notion of asynchrony immunity and how this property can be linked to existing CA models in Section~\ref{sec:open-problems}, as well as pointing out other avenues for future research on the subject.

\section{Basic Notions}
\label{sec:basic-notions}
In this section, we cover all necessary background definitions about one-dimensional CA, Boolean functions, and vectorial Boolean functions. In particular, we refer the reader to~\cite{carlet2010a,carlet2010b} for an in-depth discussion of (vectorial) Boolean functions.

Recall that a \emph{Boolean function} is a mapping $f:\F_2^n \to \F_2$, where $\F_2 = \set{0,1}$ denotes the finite field of two elements. Once an ordering of the $n$-bit input vectors has been fixed, each Boolean function $f$ can be uniquely represented by the output column of its \emph{truth table}, which is a vector $\Omega_f$ of $2^n$ binary elements. Therefore, the set of all possible Boolean functions of $n$ variables, denoted by $\mathcal{B}_n$, has cardinality $2^{2^n}$. The interpretation of the vector $\Omega_f$ as a decimal number is also called the \emph{Wolfram code} of the function $f$. Another common way of representing a Boolean function is through its \emph{Algebraic Normal Form} (ANF), that is, as a sum of products over its input variables. More formally, given $f: \F_2^n \to \F_2$ and $x \in \F_2^n$, the ANF will be of the form
\begin{equation}
  \label{eq:anf}
  P_f(x) = \bigoplus_{I \in 2^{[n]}} a_I \left ( \prod_{i \in I} x_i \right ) \enspace ,
\end{equation}
where $[n]$ is the initial segment of the natural numbers determined by $n \in \N$, i.e., $[n] = \set{0, \ldots, n-1}$, and the set $I = \set{i_1, \ldots, i_t} \subseteq [n]$ is a subset of $t$ indices and thus an element of $2^{[n]}$, the power set of $[n]$. For all $I \in 2^{[n]}$ the coefficient $a_I \in \F_2$ is determined through the \emph{M\"{o}bius transform}~\cite{carlet2010a}. A function $f$ is called \emph{affine} if the only non null coefficients $a_I$ are such that $|I| \le 1$. In other words, the ANF is composed only of monomials of degree at most $1$.

Boolean functions used in the design of symmetric ciphers must satisfy a certain number of properties in order to withstand particular cryptanalytic attacks. Two of the most important properties are \emph{balancedness} and \emph{nonlinearity}. A Boolean function $f: \F_2^n \to \F_2$ is balanced if its output vector $\Omega_f$ is composed of an equal number of zeros and ones. Unbalanced Boolean functions produce a statistical bias in the output of a symmetric cipher, which can be exploited by an attacker.

The nonlinearity of $f$, on the other hand, is the minimum Hamming distance of $\Omega_f$ from the set of all affine functions. The value of nonlinearity of $f$ can be computed as $Nl(f) = 2^{-1}(2^{n} - W_{max}(f))$, where $W_{max}(f)$ is the maximum absolute value of the \emph{Walsh transform} of $f$~\cite{carlet2010a}. The nonlinearity of a Boolean function used in a cipher should be as high as possible, in order to thwart \emph{linear cryptanalysis} attacks. Nonetheless, there exist upper bounds on the nonlinearity achievable by a Boolean function with respect to the number of its input variables. In particular, for $n$ even it holds that $Nl(f) \le 2^{n-1} - 2^{\frac{n}{2}-1}$. Functions satisfying this bound with equality are called \emph{bent}. On the other hand, for $n$ odd the upper bound when $n \le 7$ is $Nl(f) \le 2^{n-1} - 2^{\frac{n-1}{2}}$, which is achieved by \emph{quadratic functions}. For $n > 7$, the exact bound is still not known.

Let $n,m \in \N$. A \emph{vectorial Boolean function} of $n$ input variables and $m$ output variables (also called an $(n, m)$-function) is a mapping $F: \F_2^n \to \F_2^m$. In particular, a $(n,m)$-function is defined by $m$ Boolean functions of the form $f_i: \F_2^n \to \F_2$, called \emph{coordinate functions}. Each $0 \le i < m$, each $f_i$ specifies the $i$-th output bit of $F$. That is, for each $x \in \F_2^n$, we have $F(x)_i = f_i(x)$ for $0 \le i < m$.

A \emph{one-dimensional cellular automaton} (CA) can be seen as a particular case of vectorial Boolean function by limiting the way the coordinate functions can be defined. Let $\ell, m, r \in \N$ be non-negative integers and let $n = \ell + m + r$. Let $f: \F_2^{\ell + r + 1} \to \F_2$ be a Boolean function of $\ell + r + 1$ variables. A \emph{cellular automaton} of length $n$ with local rule $f$, \emph{memory} $\ell$ and \emph{anticipation} $r$ is the $(n,m)$-function $F: \F_2^n \to \F_2^m$ defined for all $i \in \set{0, \ldots, m-1}$ and for all $x = (x_{-\ell}, \ldots, x_{m+r}) \in \F_2^n$ as:
\begin{equation}
  \label{eq:glob-rule-ca}
  F(x_{-\ell}, \ldots, x_{m+r-1})_i = f(x_{i-\ell}, \ldots, c_{i+r}).
\end{equation}
Thus, a CA is the special case of a vector Boolean function where all coordinate functions are defined uniformly.

A \emph{$t$-asynchronous CA}, or $t$-ACA, induced by $I$ is denoted by
$\tilde{F}_I$ and it is defined by the following global function
$\tilde{F}_I: \F_2^n \to \F_2^m$:
\begin{equation*}
  \tilde{F}_I(x_{-\ell},\ldots,x_{m+r-1})_i =
  \begin{cases}
    f_i(x_{i-\ell}, \ldots, x_{i+r}) & \text{if $i \notin I$}\\
    x_i & \text{if $i \in I$.}
  \end{cases}
\end{equation*}

We also recall that a local rule $f: \F_2^{\ell + r + 1} \to F_2$ is said to be \emph{center permutive} when for each $u \in \F_2^\ell$, $v \in \F_2^r$, and $y \in \F_2$ there exists a \emph{unique} $x \in \F_2$ such that $f(uyv) = x$. In the field $\F_2$, center permutivity can also be expressed in another way. A local rule $f: \F_2^{\ell + r} \to \F_2$ is center permutive if there exists a function $g: \F_2^{\ell + r} \to \F_2$ such that for all $x = (x_0, \ldots, x_{\ell + r}) \in \F_2^{\ell + r + 1}$ we have that:
\[
  f(x_0, \ldots, x_{\ell + r}) = x_{\ell} \oplus g(x_0, \ldots, x_{\ell - 1}, x_{\ell + 1}, \ldots, x_{\ell + r}) \;.
\]

\section{Definition of Asynchrony Immunity}
\label{sec:definition-of-asynchrony-immunity}

Recall that a CA $F: \F_2^n \to \F_2^m$ with $n = \ell + r + m$ is said to be
\emph{balanced} if for each $y \in \F_2^m$, the preimages of $y$, i.e., all
$x \in \F_2^n$ such that $F(x) = y$, denoted by $F^{-1}(y)$ is such that
$|F^{-1}(y)| = 2^{\ell + r}$. Asynchrony immune CA can then be defined as
follows:
\begin{definition}
  \label{def:ai-ca}
  Let $n, m, r, \ell, t \in \N$ be non-negative integers, with
  $n = \ell + m + r$, and $F: \F_2^n \to \F_2^m$ a balanced CA having local rule
  $f: \F_2^{\ell + r + 1} \to \F_2$.

  The CA $F$ is said to be \emph{$(t, n)$-asynchrony immune} (for short,
  $(t, n)$-AI) if for all sets $I \subseteq [m]$ with $|I| \le t$ the resulting
  $|I|$-ACA $\tilde{F}_I$ is balanced.
\end{definition}
Among all possible $2^{2^{\ell + r +1}}$ rules of memory $\ell$ and anticipation $r$, we are interested in finding local rules that generates asynchrony immune CA satisfying additional useful cryptographic properties, such as high nonlinearity. As a consequence, proving necessary conditions for a rule to generate a $(t, n)$-AI is useful in reducing the size of the search space.

We start by proving that, for large enough CA and for high enough values of $t$, a necessary condition of $f$ is central permutivity.

\begin{theorem}
  \label{thm:central-permutivity}
  Let $F: \F_2^n \to \F_2^m$ be a $(t, n)$-AI CA with memory $\ell$ and anticipation $r$. If $t \ge \ell + r$ and $n \ge 2\ell + 2r + 1$ then the local rule $f: \F_2^{\ell + r + 1} \to \F_2$ is \emph{center permutive}.
\end{theorem}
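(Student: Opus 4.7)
The plan is to argue by contradiction: assume that the local rule $f$ is not center permutive and construct a specific blocking set $I$ of size exactly $\ell+r$ that violates balancedness of $\tilde{F}_I$, thereby contradicting $(t,n)$-asynchrony immunity.

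First I would unpack the assumption: if $f$ is not center permutive, then by the alternate characterization recalled just before Section~\ref{sec:definition-of-asynchrony-immunity}, there exist a left context $u^\ast \in \F_2^\ell$, a right context $v^\ast \in \F_2^r$, and a bit $b \in \F_2$ such that
\[
  f(u^\ast, 0, v^\ast) = f(u^\ast, 1, v^\ast) = b,
\]
so the output of $f$ on context $(u^\ast, v^\ast)$ is insensitive to the central bit. The point of the proof is to leverage this insensitivity across a carefully chosen asynchronous update to freeze one output cell to the value $b$, regardless of its input.

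Next I would use the hypothesis $n \ge 2\ell + 2r + 1$, which is exactly $m \ge \ell + r + 1$, to pick an interior cell $i^\ast \in \{\ell, \ldots, m-r-1\}$ whose entire read window $\{i^\ast - \ell, \ldots, i^\ast + r\}$ sits inside the output range $[m]$. I then define
\[
  I = \{i^\ast - \ell, \ldots, i^\ast - 1\} \cup \{i^\ast + 1, \ldots, i^\ast + r\},
\]
so $|I| = \ell + r \le t$, making $I$ a legal blocking set. In $\tilde{F}_I$, all cells in the read window of $i^\ast$ except $i^\ast$ itself are frozen to their input values, while $i^\ast$ is still updated by $f$. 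I would then verify the key computation: for any $x \in \F_2^n$ whose coordinates on $I$ equal $(u^\ast, v^\ast)$, the value $\tilde{F}_I(x)_{i^\ast} = f(u^\ast, x_{i^\ast}, v^\ast) = b$ is forced, independently of $x_{i^\ast}$.

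The final step is to exhibit an unreachable output. Pick any $y \in \F_2^m$ whose restriction to $I$ equals $(u^\ast, v^\ast)$ and whose $i^\ast$-th coordinate equals $1 \oplus b$. Any preimage $x$ of $y$ under $\tilde{F}_I$ must satisfy $x_j = y_j = (u^\ast, v^\ast)_j$ for $j \in I$ (since those cells are unchanged), so the forced computation above gives $\tilde{F}_I(x)_{i^\ast} = b \ne y_{i^\ast}$, a contradiction. Hence $\tilde{F}_I^{-1}(y) = \varnothing$, which has size $0 \ne 2^{\ell+r}$, so $\tilde{F}_I$ is unbalanced and $F$ is not $(t,n)$-AI. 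The main conceptual step, and the only nontrivial one, is the choice of the blocking set $I$: once one realizes that blocking everything in a single read window except its center turns non-center-permutivity into a deterministic output at $i^\ast$, the remainder is a straightforward counting argument.
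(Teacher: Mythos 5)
Your proof is correct and uses essentially the same construction as the paper: block every cell of one read window except its center, so that non-center-permutivity forces that output cell to a constant value and leaves some output configuration with no preimages. The only difference is organizational---you argue the contrapositive directly, whereas the paper reaches the very same empty-preimage contradiction as a sub-step of a counting argument that first attributes factors $2^{\ell}$ and $2^{r}$ to the two flanks $w_1$ and $w_2$ of the preimage---so your version is, if anything, slightly more streamlined.
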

\begin{proof}
  Suppose $F$ to be $(t, n)$-AI with $t$ and $n$ as in the hypothesis. Let $y = u_1 a u_2v \in \F_2^m$ be a configuration with $u_1 \in \F_2^\ell$, $a \in \F_2$, $u_2 \in \F_2^r$, and $v \in \F_2^{m-\ell-r-1}$. Let the set $I \supseteq \set{0,\ldots,\ell, \ell + 2, \ldots, r}$ be a set of indices to be blocked. It then follows that each preimage of $y$ can be expressed in the form $x = w_1 u_1 b u_2 w_2$ with $w_1 \in \F_2^\ell$, $b \in \F_2$, and $w_2 \in \F_2^{m + r -\ell - 1}$. Notice that both $u_1$ and $u_2$ remain unchanged when applying $\tilde{F}_I$ to $x$, since their indices are all contained in $I$. This situation is illustrated in Fig.~\ref{fig:center-permutive}.

  Since the value of the cells in $w_1$ cannot influence \emph{any} cell in $\tilde{F}_I(x)$ (since all cells that can be influenced are blocked), if $x = w_1 u_1 b u_2 w_2$ is a preimage of $y$, also $x' = w_1' u_1 b u_2 w_2$ for every $w_1' \in \F_2^\ell$ is a preimage of $y$. Hence, the first $\ell$ cells of the automaton contribute a multiplicative factor of $2^\ell$ for the number of preimages.

  We are now going to prove that the remaining factor of $2^r$ for the number of preimages is entirely due to the last $m + r - 1$ cells (i.e., the part denoted by $w_2$).

  For the sake of argument, suppose that the multiplicative factor contributed by the last $m + r -\ell - 1$ cells (i.e., the part denoted by $w_2$ in the preimages) is less than $2^r$, since only a single other cell in the preimage can change (the one denoted by $b$), it follows that, in that case the following two configurations are preimages of $y$ for some choice of $w_2$:
  \begin{align*}
    && x & = w_1 u_1 0 u_2 w_2 \\
    && x'& = w_1 u_1 1 u_2 w_2\;.
  \end{align*}
  Notice that the value of $a$ in $y$ is either $0$ or $1$ and it is influenced only by its own value and the value of $u_1$ and $u_2$. Without loss of generality, suppose that $a = 0$. Consider now the preimages of $y' = u_1 1 u_2 v$. To obtain $1$ in the unblocked position between $u_1$ and $u_2$ then, it must be $f(u_1 0 u_2) = 1$ or $f(u_1 1 u_2) = 1$, but by our previous assumption, both $f(u_1 0 u_2)$ and $f(u_1 1 u_2)$ are equal to $0$, and $y'$ has no preimages. Hence, our hypothesis that the part denote by $w_2$ in the preimages contributes less than a factor of $2^r$ in the number or preimages is inconsistent with the fact that $\tilde{F}_I$ must be balanced.

  Therefore, the parts $w_1$ and $w_2$ contribute, respectively, factors $2^\ell$ and $2^r$ in the number of preimages, for a total of $2^{\ell + r}$ preimages. It follows that, for each $u_1 \in \F_2^\ell$, $u_2 \in \F_2^r$, and $a \in \F_2$ there should be only \emph{one} value $b \in \F_2$ such that $f(u_1 b u_2) = a$. This means that $f$ is center permutive. 
\end{proof}

\begin{figure*}
  \begin{center}
    \begin{tikzpicture}
      \draw (0,0) rectangle (1.5, 0.5);
      \node[anchor=center] at (0.75,0.25) {$w_1$};
      \draw (1.5,0) rectangle (3, 0.5);
      \node[anchor=center] at (2.25,0.25) {$u_1$};
      \draw (3,0) rectangle (3.5, 0.5);
      \node[anchor=center] at (3.25,0.25) {$b$};
      \draw (3.5,0) rectangle (5, 0.5);
      \node[anchor=center] at (4.25,0.25) {$u_2$};
      \draw (5,0) rectangle (9.5,0.5);
      \node[anchor=center] at (7.25,0.25) {$w_2$};

      \draw (1.5,-2) rectangle (3,-1.5);
      \node[anchor=center] at (2.25,-1.75) {$u_1$};
      \draw (3,-2) rectangle (3.5,-1.5);
      \node[anchor=center] at (3.25,-1.75) {$a$};
      \draw (3.5,-2) rectangle (5,-1.5);
      \node[anchor=center] at (4.25,-1.75) {$u_2$};
      \draw (5,-2) rectangle (8,-1.5);
      \node[anchor=center] at (6.5,-1.75) {$v$};

      \draw[dashed] (0,0) -- (1.5,-1.5);
      \draw[dashed] (9.5,0) -- (8,-1.5);

      \draw[pattern=north west lines] (1.5,0) rectangle (3,-1.5);
      \draw[pattern=north west lines] (3.5,0) rectangle (5,-1.5);
    \end{tikzpicture}
  \end{center}
  \caption{\label{fig:center-permutive} The construction employed by the proof
    of Theorem~\ref{thm:central-permutivity}. The patterned background denotes
    the blocked cells. Here is it is possible to see that the part labeled with
    $w_1$ cannot influence any of the output cells. The cell labeled $b$ can
    influence only the cell labeled $a$ in the output, thus forcing the local
    rule to be center permutive.}
\end{figure*}
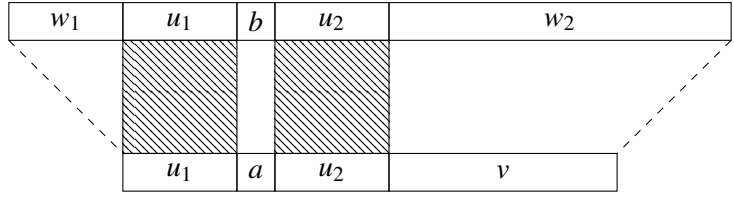

The previous theorem can be generalized as follows:

\begin{theorem}
  \label{thm:bijection}
  Let $F: \F_2^n \to \F_2^m$ be a $(t, n)$-AI CA with memory $\ell$ and
  anticipation $r$ and $k \in \N$ be a non-negative integer. Then, if
  $t \ge \ell + r$ and $n \ge 2\ell + 2r + k$, the function
  $F_{u,v}: \F_2^k \to \F_2^k$, which, for each $u \in \F_2^\ell$ and
  $v \in \F_2^r$, is defined as $F_{u,v}(x) = F'(uxv)$ where
  $F': \F_2^{k + \ell + r} \to \F_2^{k}$ is a CA with the same local rule as
  $F$, is a bijection.
\end{theorem}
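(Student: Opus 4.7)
The plan is to extend the counting argument used for Theorem~\ref{thm:central-permutivity} from a single unblocked cell to a window of $k$ consecutive unblocked cells. Since asynchrony immunity is invariant under reflection, I assume without loss of generality that $r \ge \ell$; the case $r < \ell$ is recovered by applying the result to the reflected CA $F^R$ (whose memory and anticipation are swapped) and transferring the bijection via the reversal map on $\F_2^k$.

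The first step is to choose the blocked set $I = \set{0, \ldots, \ell - 1} \cup \set{\ell + k, \ldots, \ell + k + r - 1}$, which has exactly $\ell + r \le t$ elements; the bound $n \ge 2\ell + 2r + k$ guarantees $m \ge \ell + k + r$, so $I \subseteq [m]$. By the $(t,n)$-AI hypothesis, $\tilde{F}_I$ is balanced, so every $y \in \F_2^m$ has exactly $2^{\ell + r}$ preimages. Decomposing $y = u_1 \tilde{y} u_2 w$ with $u_1 \in \F_2^\ell$, $\tilde{y} \in \F_2^k$, $u_2 \in \F_2^r$, and $w \in \F_2^{m - \ell - k - r}$, the blocked positions force any preimage $x$ to satisfy $x_0 \cdots x_{\ell-1} = u_1$ and $x_{\ell + k} \cdots x_{\ell + k + r - 1} = u_2$.

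The central step is to show that the remaining count factors into three independent contributions. The left-memory cells $x_{-\ell}, \ldots, x_{-1}$ appear only in the neighborhoods of the blocked outputs and thus contribute a free factor of $2^\ell$. The middle cells $x_\ell, \ldots, x_{\ell + k - 1}$ are precisely the free variables entering the $k$ unblocked middle equations, which jointly read $F_{u_1, u_2}(x_\ell, \ldots, x_{\ell + k - 1}) = \tilde{y}$ and so contribute $|F_{u_1, u_2}^{-1}(\tilde{y})|$ choices. The right cells $x_{\ell + k + r}, \ldots, x_{m + r - 1}$, together with $u_2$, determine the remaining unblocked outputs and yield some count $R(w; u_2)$. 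The key geometric observation, for which the assumption $r \ge \ell$ is essential, is that the leftmost input used by any right equation is $x_{k + r}$, whose index satisfies $k + r \ge \ell + k$; hence it already lies inside $u_2$ or in the right-free region, and no middle cell appears in any right equation.

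Equating this factorization with the total preimage count gives $2^\ell \cdot |F_{u_1, u_2}^{-1}(\tilde{y})| \cdot R(w; u_2) = 2^{\ell + r}$ for all $\tilde{y}$ and $w$. Since the right-hand side is constant while the two variable factors depend on disjoint arguments, $|F_{u_1, u_2}^{-1}(\tilde{y})|$ must be a constant $c$ independent of $\tilde{y}$; summing over $\tilde{y} \in \F_2^k$ then gives $c \cdot 2^k = 2^k$, so $c = 1$ and $F_{u_1, u_2}$ is bijective. The main obstacle is the decoupling step in the third paragraph: without $r \ge \ell$, some middle cells can appear in right equations and the factorization collapses, which is precisely why the reflection reduction is needed to dispose of the opposite case.
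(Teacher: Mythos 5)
Your choice of blocked set and the overall preimage-counting strategy match the paper's, but the reduction you use to dispose of the case $r < \ell$ is a genuine gap. You assert that ``asynchrony immunity is invariant under reflection'' and apply this for arbitrary $\ell$ and $r$, whereas Proposition~\ref{lm:inv-refl} establishes reflection invariance only when $\ell = r$, and the paper explicitly remarks right after it that the reverse of a $(t,n)$-AI CA with $\ell \neq r$ need not be $(t,n)$-AI: the reflected index set $\set{m+r-\ell-1-i : i \in I}$ is a subset of $[m]$ for all admissible $I$ only if $\ell = r$, so balancedness of the reflected asynchronous CA cannot be transferred from the original one. As written, your argument therefore proves the theorem only for $r \ge \ell$.

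The case split is in fact unnecessary, because the full three-factor decoupling you build is more than the statement needs. The only structural fact required holds unconditionally: each unblocked output in positions $\ell, \ldots, \ell+k-1$ depends exclusively on the inputs $x_0, \ldots, x_{\ell+k+r-1}$, that is, on $u_1$, the middle block $b$, and $u_2$; hence the middle portion of $\tilde{F}_I(x)$ equals $F_{u_1,u_2}(b)$ for \emph{every} input $x$, no matter what the right equations do. Consequently, if some $\tilde{y} \in \F_2^k$ had no preimage under $F_{u_1,u_2}$, then $u_1 \tilde{y} u_2 w$ would have no preimage under $\tilde{F}_I$ at all, contradicting balancedness; surjectivity of a map from $\F_2^k$ to itself already gives bijectivity. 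This is essentially the contradiction the paper invokes (phrased there as the $w_2$ part being forced to contribute a full factor $2^r$), and it does not require the right equations to avoid the middle cells, so the hypothesis $r \ge \ell$ --- and with it the reflection detour --- can simply be dropped. Your factorization argument for the case $r \ge \ell$ is otherwise sound.
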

\begin{proof}
  The proof of this theorem follows the same reasoning of the proof of
  Theorem~\ref{thm:central-permutivity}. Let $I$ be a set of indices to be
  blocked such that $I \supseteq \set{0,\ldots,\ell-1,\ell+k,\ell+k+r}$. Each
  element of $\F_2^m$ can then be rewritten in the form $y = u_1au_2v$ with
  $u_1 \in \F_2^\ell$, $u_2 \in \F_2^r$, $a \in \F_2^k$, and
  $v \in \F_2^{m -\ell -r -k}$. Similarly, a preimage of $y$ can be expressed in
  the form $x = w_1u_1au_2w_2$ with $w_1 \in \F_2^\ell$,
  $w_2 \in \F_2^{m+r-\ell-k}$, and $a \in \F_2^k$. Following the same reasoning
  of the proof of Theorem~\ref{thm:central-permutivity}, it can be shown that
  the $w_1$ part of the preimage contributes a factor $2^\ell$ in the number of
  preimages and that the $w_2$ part contributes a factor of $2^r$. Hence, the part
  denoted by $b$ in $y$ can have only one preimage. Therefore, when restricted
  to the $k$ cells ``surrounded'' by $u_1$ and $u_2$, the global function of the
  CA is a bijection, as desired.  
\end{proof}

\begin{figure*}
  \begin{center}
    \begin{tikzpicture}
      \draw (0,0) rectangle (1.5, 0.5);
      \node[anchor=center] at (0.75,0.25) {$w_1$};
      \draw (1.5,0) rectangle (3, 0.5);
      \node[anchor=center] at (2.25,0.25) {$u_1$};
      \draw (3,0) rectangle (4.5, 0.5);
      \node[anchor=center] at (3.75,0.25) {$b$};
      \draw (4.5,0) rectangle (6, 0.5);
      \node[anchor=center] at (5.25,0.25) {$u_2$};
      \draw (6,0) rectangle (10.5,0.5);
      \node[anchor=center] at (7.75,0.25) {$w_2$};

      \draw (1.5,-2) rectangle (3,-1.5);
      \node[anchor=center] at (2.25,-1.75) {$u_1$};
      \draw (3,-2) rectangle (4.5,-1.5);
      \node[anchor=center] at (3.75,-1.75) {$a$};
      \draw (4.5,-2) rectangle (6,-1.5);
      \node[anchor=center] at (5.25,-1.75) {$u_2$};
      \draw (6,-2) rectangle (9,-1.5);
      \node[anchor=center] at (7,-1.75) {$v$};

      \draw[dashed] (0,0) -- (1.5,-1.5);
      \draw[dashed] (10.5,0) -- (9,-1.5);

      \draw[pattern=north west lines] (1.5,0) rectangle (3,-1.5);
      \draw[pattern=north west lines] (4.5,0) rectangle (6,-1.5);

      \draw[->] (3.75,-0.25) -- (3.75,-1.25);
      \node[anchor=west] at (3.75,-0.75) {$F_{u,v}$};
    \end{tikzpicture}
  \end{center}
  \caption{\label{fig:bijectivity} The construction employed by the proof of
  Theorem~\ref{thm:bijection}. The patterned background denotes the blocked
  cells. For each value of $u$ and $v$ the function $F_{u,v}$ is a bijection
  from $\F_2^k$ to $\F_2^k$ where $k$ is the length of $b$.}
\end{figure*}
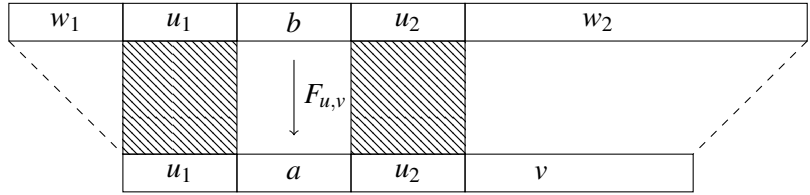

Recall that the \emph{reverse} of a vector $x = (x_0, \ldots, x_{n-1})$ is the
vector $x^R = (x_{n-1}, \ldots, x_0)$ with all components of $x$ appearing in
reverse order. Also, the \emph{complement} of $x$ is the vector
$x^C = (1 \oplus x_0, \ldots, 1 \oplus x_{n-1})$ where all components of $x$
appear negated. Given a local rule $f: \F_2^{\ell + r + 1} \to \F_2$ it is
possible to define its reverse $f^R : \F_2^{\ell + r + 1} \to \F_2$ as
$f^R(x) = f(x^R)$ and its complement $f^C: \F_2^{\ell + r + 1} \to \F_2$ as
$f^C(x) = 1 \oplus f(x)$ for all $x \in \F_2^{\ell + r + 1}$. The definition of
reverse and complement can also be extended to a CA $F: \F_2^n \to \F_2^m$ in
the following way:
\begin{align*}
  & F^R(x)_i = (F(x^R)^R)_i = f(x_{i+r}, \ldots, x_{i - \ell}) & & \forall 0 \le i < m\\
  & F^C(x)_i = 1 \oplus F(x)_i = 1 \oplus f(x_{i - \ell}, \ldots, x_{i + r}) && \forall 0 \le i < m \;.
\end{align*}

We can now show that, for a given $(t, n)$-AI CA it is possible to obtain other
(not necessarily distinct) $(t, n)$-AI by taking either its reverse or its
complement.

\begin{proposition}
  \label{lm:inv-refl}
  Let $F: \F_2^n \to F_2^m$ be a $(t, n)$-AI CA for some $n, m , t \in \N$ with
  $n = m + r + \ell$ and $r = \ell$. Then its reverse $F^R$ is also a
  $(t, n)$-AI CA.
\end{proposition}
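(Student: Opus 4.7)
The plan is to show that applying $F^R$ to an input, with cells in $I$ blocked, is the same up to reversal of input and output as applying $F$ to the reversed input, with cells in the reflected set $I^R=\{m-1-i : i \in I\}$ blocked. Since reversing a vector is a bijection, balancedness is preserved under composition with reversals, and since $|I^R|=|I|\le t$, the $(t,n)$-AI property of $F$ will then transfer to $F^R$.

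More concretely, let $\rho_k:\F_2^k\to\F_2^k$ denote the reversal map on vectors of length $k$. First I would verify the identity
\[
F^R \;=\; \rho_m \circ F \circ \rho_n,
\]
which is essentially how $F^R$ is defined (through $f^R(x)=f(x^R)$) together with the output reversal. The crucial observation is that the central cell at output position $i$ is the same physical cell as input position $i+\ell$; under $\rho_n$ this input position goes to $n-1-(i+\ell)=m+r-1-i$, and this equals $(m-1-i)+\ell$ \emph{exactly when $r=\ell$}. Thus, under the full reversal $\rho_n$ on the configuration, the output cell at position $i$ is swapped with the output cell at position $m-1-i$. This is where the hypothesis $r=\ell$ is used, and it is the main technical point to check.

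From this alignment I would then establish the asynchronous analogue
\[
\widetilde{F^R}_I \;=\; \rho_m \circ \tilde{F}_{I^R} \circ \rho_n,
\]
by checking the two cases of the piecewise definition: for $i\notin I$ the equality reduces to the synchronous identity above, and for $i\in I$ both sides simply leave the cell unchanged, with the reversal pairing $i$ with $m-1-i\in I^R$. Once this identity is in hand, the conclusion is immediate: $\rho_m$ and $\rho_n$ are bijections, so $\widetilde{F^R}_I$ is balanced if and only if $\tilde{F}_{I^R}$ is balanced; since $|I^R|=|I|\le t$, the $(t,n)$-AI hypothesis on $F$ gives the balancedness of $\tilde{F}_{I^R}$, and hence of $\widetilde{F^R}_I$.

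The main obstacle is purely bookkeeping with the indexing conventions: making sure that the reflection of output-cell indices is consistently handled both at the level of the synchronous global rule and at the level of the blocked-cell set. Once the identity $\widetilde{F^R}_I = \rho_m \circ \tilde{F}_{I^R} \circ \rho_n$ is correctly set up, the balancedness argument is essentially trivial, as it only relies on composition with bijections.
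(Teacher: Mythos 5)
Your proposal is correct and follows essentially the same route as the paper's proof: both reduce the blocked reverse CA to the original asynchronous CA on a reflected index set (your $I^R=\{m-1-i: i\in I\}$ is exactly the paper's $J=\{m+r-\ell-1-i : i\in I\}$ once $r=\ell$), and both conclude by noting that reversal is a bijection and $|I^R|=|I|\le t$. Your identification of where $r=\ell$ is needed (alignment of the central cell under reversal) is the same observation the paper phrases as ``$J\subseteq[m]$ only if $\ell=r$.''
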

\begin{proof}
  Starting with the reverse CA, by definition $F^R(x)$ is $F(x^R)^R$. Hence,
  given a set of indices $I$ with $|I| \le t$, the reflection of the $|I|$-ACA
  $\tilde{F}_I^R$ is:
  \begin{equation}
    \label{eq:refl-aca}
    \tilde{F}_I^R(x)_i = (\tilde{F}_J(x^R)^R)_i =
    \begin{cases}
      f(x_{i+r},\ldots,x_{i-\ell}) & \text{if $i \notin J$}\\
      x_i & \text{if $i \in J$}
    \end{cases}
  \end{equation}
  Where $J \subseteq \set{-\ell, \ldots, m + r -1}$ is defined as a ``reverse''
  of the set $I$ of indices, that is $J = \set{m +r -\ell -1 -i : i \in
    I}$. Notice that $J \subseteq [m]$ in all cases only if $\ell = r$. This
  means that for every set $I$ of indices for $F^R$, the corresponding set $J$
  of indices in $F$ is still a valid one (i.e., a subset of $[m]$). Notice that
  since $f$ generates a $(t, n)$-AI CA and $|J| = |I| \le t$, the resulting ACA
  is still $(t, n)$-AI.  
\end{proof}

Notice that, in general, if a $(t,n)$-AI CA has memory $\ell$ and anticipation
$r$ with $\ell \ne r$, its reverse might not be a $(t,n)$-AI CA. In fact, since
center permutivity of the local rule is not preserved, this negates a condition
for asynchrony-immunity that, by Theorem~\ref{thm:central-permutivity}, is
necessary for large enough values of $t$ and $n$.

\begin{proposition}
  \label{lm:inv-compl}
  Let $F: \F_2^n \to F_2^m$ be a $(t, n)$-AI CA for some $n, m , t \in \N$. Then
  its complement $F^C$ is also a $(t, n)$-AI CA.
\end{proposition}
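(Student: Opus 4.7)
The plan is to reduce the balancedness of $\tilde{F^C}_I$ (for any $I$ with $|I|\le t$) to the balancedness of $\tilde{F}_I$, which is guaranteed by the $(t,n)$-AI hypothesis on $F$. The only subtlety is that complementation interacts differently with blocked and unblocked cells, so I will not complement the target vector uniformly, but only on the unblocked positions.

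First I would unpack what needs to be proved: by Definition~\ref{def:ai-ca}, $(t,n)$-AI requires both that $F^C$ itself be balanced and that $\tilde{F^C}_I$ be balanced for every $I\subseteq[m]$ with $|I|\le t$. Balancedness of $F^C$ is immediate, since $F^C(x)=y$ iff $F(x)=y^C$, so $|(F^C)^{-1}(y)|=|F^{-1}(y^C)|=2^{\ell+r}$ by balancedness of $F$.

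Next, fix $I\subseteq[m]$ with $|I|\le t$ and $y\in\F_2^m$. Spelling out the definition, $\tilde{F^C}_I(x)=y$ means $f(x_{i-\ell},\ldots,x_{i+r})=1\oplus y_i$ for every $i\notin I$, and $x_i=y_i$ for every $i\in I$. I would then introduce the map $\phi_I:\F_2^m\to\F_2^m$ defined by $\phi_I(y)_i=1\oplus y_i$ for $i\notin I$ and $\phi_I(y)_i=y_i$ for $i\in I$. Comparing with the defining system for $\tilde{F}_I$, one sees immediately that $\tilde{F^C}_I(x)=y$ iff $\tilde{F}_I(x)=\phi_I(y)$. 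Since $\phi_I$ is an involution, hence a bijection on $\F_2^m$, we obtain $|(\tilde{F^C}_I)^{-1}(y)|=|\tilde{F}_I^{-1}(\phi_I(y))|=2^{\ell+r}$, where the last equality uses the $(t,n)$-AI property of $F$ applied to the same set $I$.

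The only place that needs genuine care, and what I expect to be the main conceptual point rather than a technical obstacle, is remembering that blocked cells copy their input verbatim and are therefore untouched by the complementation built into $f^C$; this is exactly why $\phi_I$ must leave the coordinates in $I$ fixed. Once this is observed, the argument reduces to a one-line bijection between preimage sets, and the conclusion that $F^C$ is $(t,n)$-AI follows for every admissible $I$.
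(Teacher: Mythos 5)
Your proof is correct and follows essentially the same route as the paper's: reduce the preimage count of the complemented ACA at a target $y$ to the preimage count of the original ACA at a bijectively transformed target, for the same blocked set $I$, and invoke the $(t,n)$-AI hypothesis. You are in fact more careful than the paper on the one subtle point: the paper's proof writes $(F^C_I)^{-1}(y)=F_I^{-1}(1\oplus y)$, i.e.\ complements all $m$ coordinates, which is not literally correct when $I\neq\varnothing$, since blocked cells copy $x_i$ verbatim under both $f$ and $f^C$; your map $\phi_I$, complementing only the coordinates outside $I$, is the transformation that actually matches the definition of $\tilde{F}_I$. Since both $1\oplus(\cdot)$ and $\phi_I$ are bijections of $\F_2^m$, the balancedness conclusion is unaffected either way, but your version is the tighter one.
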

\begin{proof}
  Let $y \in \F_2^m$ be a configuration, $I \subseteq [m]$ with $|I| \le t$, and let $(F^C_I)^{-1}(y)$ be the set of preimages of $y$ under the function $F^C_I$. By definition, for each $x \in \F_2^n$, $F^C(x) = 1 \oplus F(x)$. Hence, the set $(F^C_I)^{-1}(y)$ is $\set{x : 1 \oplus F_I(x) = y}$, which is $\set{x : F_I(x) = 1 \oplus y}$ which corresponds to $F^{-1}_I(1 \oplus y)$. Since $F$ is a $(t, n)$-AI CA, and all $y$ ranges across all elements of $\F_2^m$ (and thus $1 \oplus y$ does the same), $F^{-1}_I$ is balanced and $(F^C_I)^{-1}$ is also balanced. Since this holds for every set $I$ of cardinality at most $t$, it follows that $F^C$ is also a $(t, n)$-AI CA, as required. 
\end{proof}
Upper bounds on the size of the search space could be derived using techniques from~\cite{Cattaneo1997a} w.r.t. to the set of transformations $F^R, F^C, F^{RC}, Id$, where $Id$ is the identity transformation.

\section{Search of AI Rules up to $5$ Variables}
\label{sec:exp}
In order to search for asynchrony immune rules having additional cryptographic properties, by Theorem~\ref{thm:central-permutivity} and Propositions~\ref{lm:inv-refl} and~\ref{lm:inv-compl} we only need to explore center-permutive rules under the equivalence classes induced by reflection and complement.

In our experiments, we fixed the number of output bits in the CA to $m=8$. Since we are considering only center-permutive rules, we tested only the smallest value of $t$ satisfying the hypothesis of Theorem~\ref{thm:central-permutivity}. The reason why we limited our analysis to these particular values is twofold. First, checking for asynchrony immunity is a computationally cumbersome task, since it requires to determine the output distribution of the $t$-ACA for all possible choices of at most $t$ blocked cells. Second, the sizes of vectorial Boolean functions employed as nonlinear components in several real-world cryptographic primitives is limited. A concrete example is given by AES~\cite{aes}, which employs a S-box with $8$ output bits.

Table~\ref{tab:ca-param} shows all CA parameters considered in our experiments from $3$ to $5$ input variables of the local rules, while keeping the value of output bits fixed to $m=8$. Recall that, since we need to consider only center permutive local rules, we do not need to explore the entire $\mathcal{B}_{\ell+r+1}$ space, but only the subset $\mathcal{C}_{\ell+r+1}$ having cardinality $2^{2^{\ell+r}}$.

\begin{table}[t]
  \centering
  \scriptsize
  \begin{tabular}{cccccc}
    \hline\noalign{\smallskip}
    $n$ & $\ell$ & $r$ & $t$ & $|\mathcal{B}_{\ell+r+1}|$ & $|\mathcal{C}_{\ell+r+1}|$\\
    \noalign{\smallskip}\hline
    \noalign{\smallskip}
    $10$ & $1$ & $1$ & $2$ & $256$ & $16$  \\
    $11$ & $1$ & $2$ & $3$ & $65536$ & $256$ \\
    $12$ & $2$ & $2$ & $4$ & $\approx 4.3 \cdot 10^9$ & $65536$ \\
    \noalign{\smallskip}
    \hline
    \smallskip
  \end{tabular}
  \caption{CA parameters for $m=8$ output bits.}
  \label{tab:ca-param}
\end{table}

We started our investigation by performing an exhaustive search among all CA rules with $\ell=r=1$ (that is, rules of $3$ variables), which are also known in the CA literature as \emph{elementary rules}. Up to reflection and complement, and neglecting the identity rule that is trivially AI for every length $n$ and order $t$, out of the $2^{2^3}=256$ elementary rules we found that only rule 60 is $(2,10)$--asynchrony immune. However, rule $60$ is not interesting from the cryptographic standpoint, since it is linear (its ANF being $x_2 \oplus x_3$).

We thus extended the search by considering all local rules of $4$ and $5$ input variables, according to the values of $\ell$ and $r$ reported in Table~\ref{tab:ca-param}.

For the case of $4$ variables, the search returned a total of $18$ rules satisfying $(3,11)$--asynchrony immunity, among which several of them were nonlinear. Table~\ref{tab:nbr4} reports the Wolfram codes of the discovered rules, along with their nonlinearity values and algebraic normal form. It can be observed that $12$ rules out of $18$ are nonlinear, but none of them is a bent function (since the nonlinearity value in this case would be $6$).

\begin{table*}[t]
  \centering
  \scriptsize
  \begin{tabular}{ccl|ccl}
    \hline\noalign{\smallskip}
    Rule & $Nl(f)$ & \multicolumn{1}{c|}{$f(x_0,x_1,x_2,x_3)$} & Rule & $Nl(f)$ & \multicolumn{1}{c}{$f(x_0,x_1,x_2,x_3)$} \\
    \noalign{\smallskip}\hline
    \noalign{\smallskip}
    13107 & 0 & $1 \oplus x_1$ & 14028 & 2 & $x_1 \oplus x_0x_3 \oplus x_2x_3 \oplus x_0x_2x_3$ \\
    13116 & 4 & $x_1 \oplus x_2 \oplus x_3 \oplus x_2x_3$ & 14643 & 2 & $1 \oplus x_1 \oplus x_0x_3 \oplus x_0x_2x_3$ \\
    13155 & 2 & $1 \oplus x_1 \oplus x_2 \oplus x_0x_2 \oplus x_2x_3 \oplus x_0x_2x_3$ & 14796 & 2 & $x_1 \oplus x_3 \oplus x_0x_3 \oplus x_0x_2x_3$ \\
    13164 & 2 & $x_1 \oplus x_0x_2 \oplus x_3 \oplus x_0x_2x_3$ & 15411 & 4 & $1 \oplus x_1 \oplus x_3 \oplus x_2x_3$ \\
    13203 & 2 & $1 \oplus x_1 \oplus x_0x_2 \oplus x_0x_2x_3$ & 15420 & 0 & $x_1 \oplus x_2$ \\
    13212 & 2 & $x_1 \oplus x_2 \oplus x_0x_2 \oplus x_3 \oplus x_2x_3 \oplus x_0x_2x_3$ & 15555 & 0 & $1 \oplus x_1 \oplus x_2 \oplus x_3$ \\
    13251 & 4 & $1 \oplus x_1 \oplus x_2 \oplus x_2x_3$ & 15564 & 4 & $x_1 \oplus x_2x_3$ \\
    13260 & 0 & $x_1 \oplus x_3$ & 26214 & 0 & $x_0 \oplus x_1$ \\
    13875 & 2 & $1 \oplus x_1 \oplus x_3 \oplus x_0x_3 \oplus x_2x_3 \oplus x_0x_2x_3$ & 26265 & 0 & $1 \oplus x_0 \oplus x_1 \oplus x_3$ \\
    \noalign{\smallskip}
    \hline
    \smallskip
  \end{tabular}
  \caption{List of $(3,11)$--asynchrony immune CA rules of neighborhood size 4.}
  \label{tab:nbr4}
\end{table*}

For $5$ variables, Table~\ref{tab:nbr5} reports the list of $(4,12)$-AI CA. One can see that in this case most of the asynchrony immune functions are nonlinear, and moreover two of them achieve the maximum nonlinearity allowed by the quadratic bound, which in this case is $12$.
\begin{table*}[t]
  \centering
  \scriptsize
  \begin{tabular}{ccl|ccl}
    \hline\noalign{\smallskip}
    Rule & $Nl(f)$ & \multicolumn{1}{c|}{$f(x_1,x_2,x_3,x_4,x_5)$} & Rule & $Nl(f)$ & \multicolumn{1}{c}{$f(x_1,x_2,x_3,x_4,x_5)$} \\
    \noalign{\smallskip}\hline
    \noalign{\smallskip}
    252691440  & 4 & $ x_3 \oplus x_4 \oplus x_2x_4 \oplus x_5 \oplus x_4x_5 \oplus $ & 3031741620 & 8 & $x_2 \oplus x_1x_2 \oplus x_3$ \\
               &   & $x_2x_4x_5$                                                     &            &   &                                \\
    252702960  & 0 & $x_3 \oplus x_5$ & 3035673780 & 6 & $x_2 \oplus x_1x_2 \oplus x_3 \oplus x_2x_5 \oplus x_1x_2x_5 \oplus$ \\
               &   &                  &            &   & $x_2x_4x_5 \oplus x_1x_2x_4x_5$ \\
    253678110  & 10 & $x_1 \oplus x_2 \oplus x_1x_2 \oplus x_3 \oplus x_2x_4 \oplus $ & 3537031890 & 8 & $x_1 \oplus x_1x_2 \oplus x_3$ \\
               &    & $x_4x_5 \oplus x_1x_4x_5 \oplus x_1x_2x_4x_5$                   &            &   &                                \\
    255652080  & 4 & $x_3 \oplus x_2x_5 \oplus x_4x_5 \oplus x_2x_4x_5$ & 3537035730 & 8 & $x_1 \oplus x_1x_2 \oplus x_3 \oplus x_4 \oplus x_2x_4 \oplus$  \\
               &   &                                                   &            &   & $x_4x_5 \oplus x_2x_4x_5$                                       \\
    264499440  & 4 & $x_3 \oplus x_5 \oplus x_2x_5 \oplus x_2x_4x_5$& 3539005680 & 2 & $x_3 \oplus x_1x_4x_5 \oplus x_1x_2x_4x_5$ \\
    267390960  & 0 & $x_3 \oplus x_4$ & 4027576500 & 6 & $x_2 \oplus x_1x_2 \oplus x_3 \oplus x_2x_4 \oplus x_1x_2x_4 \oplus$  \\
               &   &                  &            &   & $x_5 \oplus x_2x_5 \oplus x_1x_2x_5 \oplus x_4x_5 \oplus$ \\
               &   &                  &            &   & $x_2x_4x_5 \oplus x_1x_2x_4x_5$ \\
    267448560  & 8 & $x_3 \oplus x_4x_5$ & 4030525680 & 4 & $x_3 \oplus x_2x_5 \oplus x_2x_4x_5$ \\
    505290270  & 8 & $x_1 \oplus x_2 \oplus x_1x_2 \oplus x_3$ & 4031508720 & 6 & $x_3 \oplus x_5 \oplus x_2x_5 \oplus x_1x_2x_5 \oplus x_4x_5 \oplus$ \\
               &   &                                          &            &   & $x_2x_4x_5 \oplus x_1x_2x_4x_5$                                      \\
    505336350  & 8 & $x_1 \oplus x_2 \oplus x_1x_2 \oplus x_3 \oplus x_2x_4 \oplus $ & 4038390000 & 2 & $x_3 \oplus x_2x_5 \oplus x_1x_2x_5 \oplus x_2x_4x_5 \oplus$ \\
               &   & $x_2x_4x_5$                                                     &            &   & $x_1x_2x_4x_5$                                              \\
    509222490  & 4 & $x_1 \oplus x_3 \oplus x_2x_4 \oplus x_1x_2x_4$ & 4039373040 & 4   & $x_3 \oplus x_5 \oplus x_2x_5 \oplus x_4x_5 \oplus x_2x_4x_5$ \\
    517136850  & 12 & $x_1 \oplus x_1x_2 \oplus x_3 \oplus x_4 \oplus x_2x_4 \oplus$ & 4040348370 & 6 & $x_1 \oplus x_1x_2 \oplus x_3 \oplus x_1x_4x_5 \oplus x_1x_2x_4x_5$ \\
               &    & $x_4x_5$                                                      &            &   &                                                                    \\
    756994590  & 12 & $x_1 \oplus x_2 \oplus x_1x_2 \oplus x_3 \oplus x_2x_4 \oplus$ & 4042268400 & 6 & $x_3 \oplus x_1x_4 \oplus x_2x_4 \oplus x_1x_2x_4 \oplus$ \\
               &    & $x_4x_5$                                                      &            &   & $x_1x_4x_5 \oplus x_2x_4x_5 \oplus x_1x_2x_4x_5$                                            \\
    2018211960 & 8 & $x_1x_2 \oplus x_3 \oplus x_5 \oplus x_2x_5 \oplus$ & 4042276080 & 4 & $x_3 \oplus x_2x_4 \oplus x_2x_4x_5$ \\
               &   & $x_4x_5 \oplus x_2x_4x_5$                           &            &   &                                     \\
    2018212080 & 10 & $x_3 \oplus x_1x_2x_4 \oplus x_5 \oplus x_2x_5 \oplus$ & 4042310640 & 4 & $x_3 \oplus x_4 \oplus x_2x_4 \oplus x_4x_5 \oplus x_2x_4x_5$ \\
               &    & $x_1x_2x_5 \oplus x_4x_5 \oplus x_2x_4x_5 \oplus$ &    &   &                                                                           \\
               &    & $x_1x_2x_4x_5$                                    &    &   &                                                                           \\
    2526451350 & 0 & $x_1 \oplus x_2 \oplus x_3$ & 4042318320 & 2 & $x_3 \oplus x_4 \oplus x_1x_4 \oplus x_2x_4 \oplus x_1x_2x_4 \oplus$ \\
               &   &                             &            &   & $x_4x_5 \oplus x_1x_4x_5 \oplus x_2x_4x_5 \oplus x_1x_2x_4x_5$       \\
    3023877300 & 6 & $x_2 \oplus x_1x_2 \oplus x_3 \oplus x_1x_2x_5 \oplus$ & 4042322160 & 0 & $x_3$ \\
               &   & $x_1x_2x_4x_5$                                         &            &   &       \\
    3027809460 & 8 & $x_2 \oplus x_1x_2 \oplus x_3 \oplus x_2x_5 \oplus$ \\
               &   & $x_2x_4x_5$                                         \\
    \noalign{\smallskip}
    \hline
    \smallskip
  \end{tabular}
  \caption{List of $(4,12)$--asynchrony immune CA rules of neighborhood size 5.}
  \label{tab:nbr5}
\end{table*}

\section{Open Problems}
\label{sec:open-problems}

There are many possible research directions for exploring asynchrony immune CA, mainly related to generalizations and relations with other models.

From the generalization point of view, we can relax the assumption that an attacker can control the updating of at most $t$ cells on $n$ cells CA. We can suppose that additional ``anti-tamper'' measures are present and, for example, that the attacker can only take control of non-consecutive cells. More in general, we can define $(\familyF,n)$-asynchrony immune CA where $\familyF \subseteq 2^{[m]}$ is a family of subsets of $\set{0,\ldots,m-1}$. The standard $(t,n)$-AI CA can be recovered by taking $\familyF$ as the set of all subsets of $[m]$ with cardinality at most $t$. It would be interesting to understand for what families of sets the theorems of this paper still hold. Also, what are some families that are ``plausible'' from a real-world point of view? This study will also require to explore the different methods that can be employed by an attacker to take control of some cells and what physical limits restrict the patterns of blocked cells that can be generated.

Another research direction is to find relations with already existing CA models that can be used to implement AI CA. Take, for example, the \emph{Multiple Updating Cycles CA} (MUCCA)~\cite{ManzoniEtAl_AFCA2016}, where each cell has a speed $1/k$ for a positive $k \in \N$ and a cell updates only if the current time step is a multiple of $k$. This means that, at different time steps, different cells might be active. If the current time step is not known or if it is under the attacker's control, then a CA that is $(t, n)$-AI can withstand any situation in which the number of ``slow'' cells (i.e., with speed less than $1$) is bounded by $t$. More generally, in what other models of ACA being asynchrony immune can protect from an attacker that controls some variables (like the time step in MUCCA)?

Subsequently, we have found that for size $n=11$ there are no $(11, 4)$-AI CA rules reaching maximum nonlinearity, that is, none of them is a bent function. Hence, an interesting question would be if there exists at least one bent AI CA rule of larger number of variables, and if it is possible to design an infinite family of bent AI CA.

Finally, from the cryptanalysis point of view, it would be interesting to analyze the resistance to clock-fault attacks of cryptographic primitives and ciphers based on cellular automata, such as the stream cipher CAR30~\cite{das13}, the $\chi$ S-box employed in the Keccak sponge construction~\cite{bertoni2011a}, or the CA-based S-boxes optimized through \emph{Genetic Programming} in~\cite{picek2017,mariot2019} and to verify if plugging in their design one of the AI CA rules found here decreases their possible vulnerability.

\bibliographystyle{spmpsci}
\bibliography{bibliography.bib}

\end{document}